\title{\LARGE \bf Passivity-Based Gain-Scheduled Control with Scheduling Matrices}
\author{Sepehr Moalemi$^{1}$ and James Richard Forbes$^{2}$
    \thanks{$^{1}$Sepehr Moalemi is a M.Sc. student with the Department of Mechanical Engineering, McGill University, 817 Sherbrooke St. W., Montreal, QC H3A 0C3, Canada {\tt\small sepehr.moalemi@mail.mcgill.ca}}
    \thanks{$^{2}$James Richard Forbes is an Associate Professor and William Dawson Scholar with the Department of Mechanical Engineering, McGill University, 817 Sherbrooke St. W., Montreal, QC H3A 0C3, Canada {\tt\small james.richard.forbes@mcgill.ca}}%
}
\begin{document}
    \maketitle
    \thispagestyle{empty}
    \pagestyle{empty}

    \fontdimen16\textfont2=\fontdimen17\textfont2
    \fontdimen13\textfont2=5pt

\begin{abstract}
    This paper considers gain-scheduling of very strictly passive (VSP) subcontrollers using scheduling matrices. The use of scheduling matrices, over scalar scheduling signals, realizes greater design freedom, which in turn can improve closed-loop performance. The form and properties of the scheduling matrices such that the overall gain-scheduled controller is VSP are explicitly discussed. The proposed gain-scheduled VSP controller is used to control a rigid two-link robot subject to model uncertainty where robust input-output stability is assured via the passivity theorem. Numerical simulation results highlight the greater design freedom, resulting in improved performance, when scheduling matrices are used over scalar scheduled signals.
\end{abstract}

\begin{keywords}
    Gain-scheduling, very strictly passive (VSP), strictly positive real (SPR), passivity-based control.
\end{keywords}
    \section{Introduction}
Input-output stability theorems, such as the passivity, small gain, and conic sector theorems, have been widely used to analyze and guarantee the \(\mathcal{L}_2\)-stability of feedback interconnections. The passivity theorem ensures closed-loop \(\mathcal{L}_2\)-stability of a passive system connected in a negative feedback interconnection with a very strictly passive (VSP) system \cite{Brogliato}. The systems within the feedback loop are permitted to be time-varying or nonlinear. Moreover, precise knowledge of the system parameters is not required so long as the systems remain passive and VSP, respectively, in the face of model uncertainty. 

There are many well established linear control design methods available, such as \(\mathcal{H}_2\) and \(\mathcal{H}_{\infty}\) optimal control \cite{gang_of_four}. As such, when controlling a nonlinear system, often linear controllers are designed using a linearized model of the system about a linearization point. However, the linearized model may not capture the full dynamics of the system across a wide range of operating conditions. As a result, a controller synthesized about one linearization point may not realize adequate closed-loop performance across the same wide range of operating conditions. Gain-scheduled control is a nonlinear control technique where a set of subcontrollers are designed about multiple linearization points, and are gain-scheduled in such a way that realizes acceptable performance. More recently, the stability of gain-scheduled controllers has been studied using passivity, conicity, and dissipativity theory. In \cite{Damaren_passive_map, Forbes_Damaren}, a gain-scheduled controller composed of strictly positive real (SPR) subcontrollers is shown to be input strictly passive (ISP) \cite{Damaren_passive_map} and VSP \cite{Forbes_Damaren} provided the gain-scheduling architecture is of a specific form. An alternative passivity-based gain-scheduling architecture accounting for actuator saturation is presented in \cite{Walsh_saturation}. Gain-scheduling VSP controllers with affine dependence on plant parameters is presented in \cite{Alex_LPV}. Gain-scheduling conic systems, and relying on the conic sector theorem to ensure \(\mathcal{L}_2\)-stability of the closed-loop system, is considered in \cite{Ryan_conic}. In \cite{QSR}, QSR-dissipative properties of non-square QSR-dissipative systems are shown to be preserved under the same gain-scheduling architecture of \cite{Damaren_passive_map, Forbes_Damaren}.

The gain-scheduling techniques in \cite{Damaren_passive_map, Ryan_conic, Forbes_Damaren, Walsh_saturation, Alex_LPV, Forbes_thesis, QSR} all consider \emph{scalar scheduling signals}. In \cite{Damaren_GS_Row_Signals}, the notion of extended passive systems was shown using a row of scalar scheduling signals. Scalar scheduling signals affect the entire input-output map of the subcontrollers. When controlling a multiple-input multiple-output (MIMO) system, each control variable may require different gain scheduling to ensure acceptable performance. Additionally, it might be natural to gain-schedule based on two or more independent exogenous variables. This motivates the use of \emph{scheduling matrices} that effectively introduce more scheduling parameters to allow for additional flexibility in the scheduling of the subcontrollers. The novel contribution of this work is to extend the gain-scheduling theory in \cite{Damaren_passive_map, Forbes_Damaren} to the case of scheduling matrices. To highlight the efficacy of matrix scheduling signals, the control of a rigid two-link robot is considered. Linear VSP controllers, which take the form of SPR transfer matrices, are designed and gain-scheduled using the proposed scheduling matrices, which is compared to the scalar scheduling approach of  \cite{Damaren_passive_map, Forbes_Damaren}. The SPR subcontrollers are designed as per \cite{Damaren_passive_map, Benhabib} using the solution to the linear quadratic regulator (LQR) problem in concert with the Kalman-Yakubovich-Popov (KYP) lemma \cite{KYP}.

The remainder of this paper is as follows. Notation and preliminaries are discussed in \Cref{sec: preliminaries}. The gain-scheduling architecture is  presented in \Cref{sec: problem formulation}. Two novel theorems related to the passivity properties of the gain-scheduled system with scheduling matrices are discussed in \Cref{sec: main results}. A detailed application, complete with a discussion of controller design, is presented in \Cref{sec: simulation}, followed by closing remarks in \Cref{sec: closing remarks}.
    \section{Preliminaries} \label{sec: preliminaries}
    \subsection{Notation}
    Scalars are denoted \(\alpha \in \mathbb{R}\), matrices are denoted \(\mbf{A} \in \mathbb{R}^{n \times m}\), and column matrices are denoted \(\mbf{v} \in \mathbb{R}^{n}\). Operators are denoted by \(\bm{\mathcal{G}}\), and sets are denoted by \(\mathcal{F}\). The maximum eigenvalue and singular value of \(\mbf{A}\) are denoted as \(\lambda_{\max}(\mbf{A})\) and \(\sigma_{\max}(\mbf{A})\), respectively. A positive definite matrix is denoted as \(\mbf{A} \succ 0\).
    The notation $\diag(\cdot)$ denotes a block diagonal matrix containing its arguments. The identity and zero matrices are \(\eye\) and \(\mbf{0} \), respectively.
    \subsection{Definitions}
    \begin{definition}[Induced Matrix Norm {\cite[Section 2.7]{Zhou_Robust_Control}}]
        Given the matrix \(\mbf{A}\in \mathbb{R}^{m \times n}\), the matrix norm induced by a vector \(p\)-norm is defined as \(\mleft\| \mbf{A} \mright\|_{p} = \sup_{\mbf{x} \neq \mbf{0}} \mleft\| \mbf{A}\mbf{x} \mright\|_{p}/\mleft\| \mbf{x} \mright\|_{p}\). For \(p = 2\), it follows that \(\mleft\| \mbf{A} \mright\|_{2} = \sqrt{\lambda_{\max}(\mbf{A}^{\trans}\mbf{A})} = \sigma_{\max}(\mbf{A})\).
    \end{definition}
    \vspace{2pt}
    \begin{definition}[Truncated Signal {\cite[157-158]{Marquez}}]
        For a signal \(\mbf{u} : \mathbb{R}_{\geq 0} \to \mathbb{R}^{n}\), the truncated signal, \(\mbf{u}_T\), is defined as \(\mbf{u}_T(t) = \mbf{u}(t)\) for \(t \in \mleft[ 0, T \mright]\) and \(\mbf{u}_T(t) = \mbf{0}\) for \(t > T \in \mathbb{R}_{\geq 0}\).
    \end{definition}
    \vspace{2pt}
    \begin{definition}[Truncated Inner Product {\cite[204]{Marquez}}]
        For signals \(\mbf{u}, \mbf{y} : \mathbb{R}_{\geq 0} \to \mathbb{R}^{n}\), the truncated  inner product is defined as \(\langle \mbf{u}, \mbf{y}\rangle_T = \langle\mbf{u}_T, \mbf{y}_T\rangle = \int_{0}^{T}\hspace{-2pt} \mbf{u}^{\trans}\hspace{-1pt}(t) \mbf{y}(t)\,\dt, \forall T \in \mathbb{R}_{\geq 0}\).
    \end{definition}
    \vspace{2pt}
    \begin{definition}[$\mathcal{L}_p$ Signal Spaces {\cite[156-158]{Marquez}}] \label{def: Lp spaces}
        Given a piecewise continuous signal \(\mbf{u} : \mathbb{R}_{\geq 0} \to \mathbb{R}^{n} \), \(\mbf{u} \in \mathcal{L}_{2e}\) if \(\mleft\| \mbf{u} \mright\|_{2T} = \sqrt{\langle \mbf{u}, \mbf{u}\rangle_T} < \infty\), \(\forall T \in \mathbb{R}_{\geq0}\). Additionally, \(\mbf{u} \in \mathcal{L}_{\infty}\) if \(\mleft\| \mbf{u} \mright\|_{\infty} = \sup_{t \in \mathbb{R}_{\geq0}} \max_{i=1, \ldots, n} \lvert u_i(t) \rvert  < \infty\). 
    \end{definition}
    \vspace{2pt}
    \begin{definition}[Very Strictly Passive (VSP) {\cite[229]{Marquez}}] \label{def: VSP} 
        A square system with input \(\mbf{u} \in \mathcal{L}_{2e}\) and output \(\mbf{y} \in \mathcal{L}_{2e}\) mapped through the operator \(\bm{\mathcal{G}} : \mathcal{L}_{2e} \to \mathcal{L}_{2e}\) is VSP if there exists constants \(\beta \in \mathbb{R}_{\leq 0} \) and \(\delta, \varepsilon \in \mathbb{R}_{>0} \) such that
        \begin{equation*}
                \langle \mbf{u}, \mbf{y}\rangle_T 
                \geq
                \beta + \delta \mleft\| \mbf{u} \mright\|^{2}_{2T} + \varepsilon \mleft\| \mbf{y} \mright\|^{2}_{2T}
                , \quad \forall \mbf{u} \in \mathcal{L}_{2e}, \quad \forall T \in \mathbb{R}_{\geq0}.
        \end{equation*}
        The system is passive if \(\delta = \varepsilon = 0\), input strictly passive (ISP) if \(\delta \in \mathbb{R}_{>0} \) and \(\varepsilon = 0\), and output strictly passive (OSP) if \(\varepsilon \in \mathbb{R}_{>0} \) and \(\delta = 0\) \cite[227-228]{Marquez}.
    \end{definition}
    \vspace{2pt}
    \section{Problem Formulation} \label{sec: problem formulation}
\subsection{Matrix-Gain-Scheduling Architecture}
\begin{figure}[t]
    \vspace{2pt}
    \centering
    \includegraphics{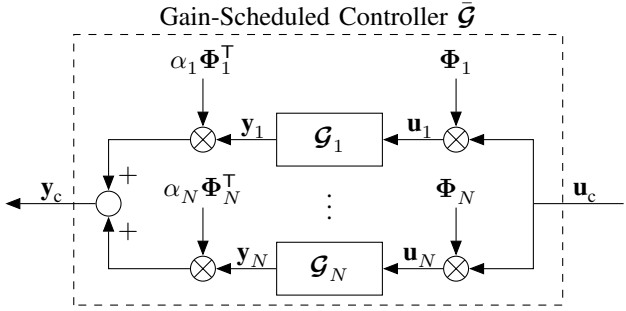}
    \vspace{-5pt}
    \caption{Gain-scheduled controller $\bar{\bm{\mathcal{G}}}$, composed of $N$ parallel VSP subcontrollers. The node $\otimes$ performs matrix multiplication between the scheduling matrices \(\mbs{\Phi}_i(\bm{\zeta }(t), \mbf{x}(t),t)\) and the signals \(\mbf{u}_{\text{c}}(t)\) and \(\mbf{y}_i(t)\) resulting in \cref{eq:GS i/o}. The positive constants $\alpha_i$ are used to scale the gain.}
    \label{fig:GS}
    \vspace{-10pt}
\end{figure}
Consider the gain-scheduled controller, $\bar{\bm{\mathcal{G}}}$, in \Cref{fig:GS}. There are $N$ VSP subcontrollers \(\bm{\mathcal{G}}_1\), \(\bm{\mathcal{G}}_2, \ldots, \bm{\mathcal{G}}_N\) of the form \(\mbf{y}_i(t) = \mleft( \bm{\mathcal{G}}_i \mbf{u}_i \mright) (t)\) satisfying \Cref{def: VSP}, meaning
\begin{equation}
        \langle \mbf{u}_i, \mbf{y}_i \rangle_T
        \geq 
        \beta_i + \delta_i \mleft\| \mbf{u}_i \mright\|^{2}_{2T} + \varepsilon_i \mleft\| \mbf{y}_i \mright\|^{2}_{2T}, \quad \forall T \in \mathbb{R}_{\geq0}, \label{eq: VSP}
\end{equation} 
for \( i \in \mathcal{N} = \{1, \ldots, N \}\) with \(\beta_i \in \mathbb{R}_{\leq 0} \) and \(\delta_i, \varepsilon_i \in \mathbb{R}_{>0} \). The subcontrollers could be linear or nonlinear. The gain-scheduled controller input-output map can be written in terms of the individual subcontroller inputs, outputs, and scheduling matrices as
\begin{subequations} \label{eq:GS i/o}
    \begin{align} 
        \mbf{u}_i(t) &= \mbs{\Phi}_i(\bm{\zeta }(t), \mbf{x}(t),t) \mbf{u}_{\text{c}}(t), \label{eq:GS i/o u_i}
        \\
        \mbf{y}_{\text{c}}(t) &= \sum_{i \in \mathcal{N}} \alpha_i \mbs{\Phi}_{i}^{\trans} (\bm{\zeta }(t), \mbf{x}(t),t) \mbf{y}_{i}^{}(t) \label{eq:GS i/o y_c},
    \end{align}
\end{subequations}
for $\mbf{u}_{\text{c}}, \mbf{y}_{\text{c}} \in \mathbb{R}^{n}$, $\alpha_i \in \mathbb{R}_{>0}$, and $\mbs{\Phi}_{i}^{} \in \mathbb{R}^{n \times n}$ for all $i \in \mathcal{N}$. The variable $\mbs{\zeta}(t)$ represents any external signal convenient for scheduling, while $\mbf{x}(t)$ is the states of the plant under control. For simplicity, the notation $\mbs{\Phi}_{i}^{}(\bm{\zeta }(t), \mbf{x}(t),t)$ is abbreviated to $\mbs{\Phi}_{i}^{}(t)$. 
\subsection{Scheduling Matrix Properties}\label{sub: scheduling matrix properties}
Consider the set of scheduling matrices \(\mbs{\Phi}_{i}(t) \in \mathbb{R}^{n \times n}\) for \(i \in \mathcal{N}\). With abuse of set notation, denote a time dependent set, $\mathcal{F}(t)$, as the index set of all full rank scheduling matrices at time \(t \in \mleft[ 0, T \mright]\) for \(T \in \mathbb{R}_{\geq0}\). That is, for \(t \in \mleft[ 0, T \mright]\)
\begin{equation}
    \mathcal{F}(t) = \mleft\{\,i \in \mathcal{N} \mid \rank \mleft(\mbs{\Phi}_i(t) \mright) = n \,\mright\}.
\end{equation}
\vspace{-10pt}
\begin{definition}[Active Scheduling Matrices]
    For any given gain-scheduled controller of type shown in \Cref{fig:GS} with scheduling matrices $\mbs{\Phi}_{i}^{}(t)\in \mathbb{R}^{n \times n}$ and $i \in \mathcal{N}$, the scheduling matrices are said to be
    \begin{itemize}
        \item \emph{active} if at all times, there exists at least one nonzero scheduling matrix, meaning $\forall t \in \mleft[ 0, T \mright]$, \(\exists i \in \mathcal{N}\) such that \(\mbs{\Phi}_i(t) \neq \mbf{0},\) and
        \item \emph{strongly active} if at all times, there exists at least one full rank scheduling matrix, meaning $\forall t \in \mleft[ 0, T \mright]$, \(\mathcal{F}(t) \neq \varnothing\).
    \end{itemize}
\end{definition}
\vspace{3pt}
\begin{lemma} \label{lemma: eigenvalues }
    Consider the gain-scheduled controller, $\bar{\bm{\mathcal{G}}}$, in \Cref{fig:GS}. Provided the scheduling matrices are strongly active, then
    \begin{equation*}
        \sum_{i \in \mathcal{N}} \lambda_{\min}\mleft( \mbs{\Phi}_{i}^{\trans}(t)\mbs{\Phi}_{i}^{}(t) \mright) 
        =
        \sum_{i \in \mathcal{F}(t)}\nu^{2}_{i}(t)
        >
        0, \quad \forall t \in \mleft[ 0, T \mright],
    \end{equation*}
    where \(\nu_{i}(t)\) is the smallest singular value of \(\mbs{\Phi}_{i}(t)\).
\end{lemma}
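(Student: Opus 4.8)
The plan is to reduce the statement to two elementary facts about the symmetric positive semidefinite Gram matrix $\mbs{\Phi}_{i}^{\trans}(t)\mbs{\Phi}_{i}^{}(t)$. First I would recall that, by the singular value decomposition, the eigenvalues of $\mbs{\Phi}_{i}^{\trans}(t)\mbs{\Phi}_{i}^{}(t)$ are precisely the squares of the singular values of $\mbs{\Phi}_{i}(t)$; this is the same identity already used in the induced $2$-norm definition above, now applied to the smallest rather than the largest eigenvalue. Since the Gram matrix is symmetric and positive semidefinite, $\lambda_{\min}$ is real, nonnegative, and well defined, and we obtain $\lambda_{\min}\bigl(\mbs{\Phi}_{i}^{\trans}(t)\mbs{\Phi}_{i}^{}(t)\bigr) = \nu_i^2(t)$ for every $i \in \mathcal{N}$, where $\nu_i(t)$ is the smallest singular value of $\mbs{\Phi}_{i}(t)$. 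The proof then amounts to tracking which of these terms vanish.

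Next I would partition the index set $\mathcal{N}$ into $\mathcal{F}(t)$ and its complement and examine each summand separately. For an index $i \notin \mathcal{F}(t)$, the matrix $\mbs{\Phi}_{i}(t)$ is rank deficient, i.e.\ $\rank(\mbs{\Phi}_{i}(t)) < n$, so it has a nontrivial kernel and therefore a zero singular value; hence $\nu_i(t) = 0$ and the corresponding term contributes nothing to the sum. For an index $i \in \mathcal{F}(t)$, the matrix $\mbs{\Phi}_{i}(t)$ has rank $n$, so it is invertible, all $n$ of its singular values are strictly positive, and in particular $\nu_i(t) > 0$. Combining the two cases collapses the sum over $\mathcal{N}$ onto the sum over $\mathcal{F}(t)$, establishing the claimed equality
\begin{equation*}
    \sum_{i \in \mathcal{N}} \lambda_{\min}\bigl( \mbs{\Phi}_{i}^{\trans}(t)\mbs{\Phi}_{i}^{}(t) \bigr)
    = \sum_{i \in \mathcal{F}(t)} \nu_i^2(t).
\end{equation*}

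Finally, the strict positivity follows immediately from the strong-activity hypothesis: by definition $\mathcal{F}(t) \neq \varnothing$ for all $t \in \mleft[ 0, T \mright]$, so the right-hand sum contains at least one term, and since every such term satisfies $\nu_i^2(t) > 0$ by the previous paragraph, the total is strictly positive. I do not anticipate a genuine obstacle here, as the argument is purely linear-algebraic; the only point requiring care is the clean equivalence that $\mbs{\Phi}_{i}(t)$ is full rank if and only if its smallest singular value is positive, which must be invoked in both directions so as to discard exactly the rank-deficient terms while retaining the strictly positive ones.
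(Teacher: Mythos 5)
Your proposal is correct and follows essentially the same route as the paper's proof: identify $\lambda_{\min}\bigl(\mbs{\Phi}_{i}^{\trans}(t)\mbs{\Phi}_{i}^{}(t)\bigr)$ with $\nu_i^2(t)$, partition $\mathcal{N}$ into $\mathcal{F}(t)$ and its complement so the rank-deficient terms vanish and the full-rank terms are strictly positive, and invoke $\mathcal{F}(t)\neq\varnothing$ for strict positivity. If anything, your version is slightly more careful than the paper's in stating the full-rank/positive-smallest-singular-value equivalence in both directions and in quantifying over all $t\in\mleft[0,T\mright]$ rather than a single $t$.
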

\begin{proof}
    Assume \(\exists t \in \mleft[ 0, T \mright]\), for \(T \in \mathbb{R}_{\geq0}\), such that \(\mathcal{F}(t) \neq \varnothing\). Therefore, \(\forall i \in \mathcal{F}(t)\), \(\mbs{\Phi}_{i}(t)\) is full rank, and its smallest singular value, \(\nu_{i}(t) \), is strictly positive. It follows that \(\mbs{\Phi}_{i}^{\trans}(t)\mbs{\Phi}_{i}^{}(t)\) is also full rank and its minimum eigenvalue is exactly \(\nu_{i}^{2}(t) \). Additionally, $\forall j \in \mathcal{N} \setminus \mathcal{F}(t)$, \(\mbs{\Phi}_{j}(t)\) is rank deficient, therefore, \(\sum_{j \in \mathcal{N} \setminus \mathcal{F}(t)}\lambda_{\min} \mleft(\mbs{\Phi}_{j}^{\trans}(t)\mbs{\Phi}_{j}^{}(t) \mright) = 0 \).
    \hspace*{\fill}~\QED\par\endtrivlist\unskip
\end{proof}

For the remainder of this paper, it is assumed that \(\alpha_i \in \mathbb{R}_{>0}\) for all \(i \in \mathcal{N}\). Moreover, the scheduling matrices are assumed to be bounded in the sense that
\begin{equation}\label{eqn: boundedness}
    \sup_{t \in \mleft[ 0, T \mright] }\mleft\| \mbs{\Phi}_{i}(t) \mright\|_{2}^{2}
    = 
    \sup_{t \in \left[ 0, T \right]} \sigma_{i}^2(t) < \infty,
\end{equation}
for all $i \in \mathcal{N}$ and \(T \in \mathbb{R}_{\geq0}\), where \(\sigma_{i}(t)\) is the largest singular value of $\mbs{\Phi}_{i}(t)$. 
    \section{Main Contribution} \label{sec: main results}
The main result of this paper, which is presented in this section, is showing that the gain-scheduled controller in \Cref{fig:GS} is VSP when the subcontrollers are VSP and the scheduling matrices are strongly active and bounded. Doing so is a generalization of \cite{Damaren_passive_map, Forbes_Damaren} from scheduling signals to scheduling matrices. 
\vspace{2pt}
\subsection[Passivity properties of gain-scheduled controller G]{Passivity properties of gain-scheduled controller $\bar{\bm{\mathcal{G}}}$}
Using the input-output map of the gain-scheduled controller $\bar{\bm{\mathcal{G}}}$ in \cref{eq:GS i/o}, it follows that
\begin{align} 
    &\langle \mbf{u}_{\text{c}}, \bar{\bm{\mathcal{G}}} \mbf{u}_{\text{c}}^{}\rangle_T
     = 
     \langle \mbf{u}_{\text{c}}, \mbf{y}_{\text{c}}^{}\rangle_T
     =
        \int_{0}^{T} \mbf{u}^{\trans}_{\text{c}}(t) \mleft( \sum_{i \in \mathcal{N}} \alpha_i \mbs{\Phi}_{i}^{\trans}(t) \mbf{y}_{i}^{}(t) \mright)\, \dt \nonumber
     \\%
     &=
     \sum_{i \in \mathcal{N}} \int_{0}^{T} \alpha_i \mbf{u}^{\trans}_{\text{c}}(t) \mbs{\Phi}_{i}^{\trans}(t)\mbf{y}_{i}^{}(t)\, \dt 
     =
        \sum_{i \in \mathcal{N}} \alpha_i \left\langle \mbf{u}_i, \mbf{y}_i \right\rangle_{T}. \label{eq: expanded u.Ty}
 \end{align} 
\begin{theorem} \label{thm: ISP connection is ISP}
    The gain-scheduled controller $\bar{\bm{\mathcal{G}}}$ in \Cref{fig:GS} is ISP if each subcontroller $\bm{\mathcal{G}}_i$ is ISP and the scheduling matrices are strongly active. 
\end{theorem}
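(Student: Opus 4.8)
The plan is to push the problem onto the individual subcontrollers through the identity already derived in \cref{eq: expanded u.Ty}, which expresses the input-output product of $\bar{\bm{\mathcal{G}}}$ as a positively weighted sum of the subcontroller products. Since each $\bm{\mathcal{G}}_i$ is ISP, meaning $\langle \mbf{u}_i, \mbf{y}_i\rangle_T \geq \beta_i + \delta_i \left\| \mbf{u}_i \right\|^2_{2T}$ with $\delta_i > 0$, and since every $\alpha_i > 0$ preserves the inequality term by term, I would first write
\[
    \langle \mbf{u}_{\text{c}}, \mbf{y}_{\text{c}} \rangle_T
    = \sum_{i \in \mathcal{N}} \alpha_i \langle \mbf{u}_i, \mbf{y}_i \rangle_T
    \geq \sum_{i \in \mathcal{N}} \alpha_i \beta_i + \sum_{i \in \mathcal{N}} \alpha_i \delta_i \left\| \mbf{u}_i \right\|^2_{2T}.
\]
The first sum is the candidate bias: $\bar{\beta} := \sum_{i \in \mathcal{N}} \alpha_i \beta_i \leq 0$, as required of an ISP constant, because each $\alpha_i > 0$ and each $\beta_i \leq 0$.

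The second step is to convert $\sum_{i \in \mathcal{N}} \alpha_i \delta_i \left\| \mbf{u}_i \right\|^2_{2T}$ into a multiple of $\left\| \mbf{u}_{\text{c}} \right\|^2_{2T}$. Substituting the map $\mbf{u}_i(t) = \mbs{\Phi}_i(t) \mbf{u}_{\text{c}}(t)$ from \cref{eq:GS i/o u_i} and applying the Rayleigh-quotient bound $\mbf{x}^\trans \mbf{M} \mbf{x} \geq \lambda_{\min}(\mbf{M}) \left\| \mbf{x} \right\|^2$ to the symmetric matrix $\mbf{M} = \mbs{\Phi}_i^\trans(t) \mbs{\Phi}_i(t)$ gives, pointwise in $t$,
\[
    \left\| \mbf{u}_i \right\|^2_{2T}
    = \int_0^T \mbf{u}_{\text{c}}^\trans(t) \mbs{\Phi}_i^\trans(t) \mbs{\Phi}_i(t) \mbf{u}_{\text{c}}(t) \, \dt
    \geq \int_0^T \lambda_{\min}\left( \mbs{\Phi}_i^\trans(t) \mbs{\Phi}_i(t) \right) \left\| \mbf{u}_{\text{c}}(t) \right\|^2 \dt.
\]
Bounding $\alpha_i \delta_i$ below by $c := \min_{i \in \mathcal{N}} \alpha_i \delta_i > 0$, exchanging the finite sum with the integral, and invoking \Cref{lemma: eigenvalues } to replace $\sum_{i \in \mathcal{N}} \lambda_{\min}(\mbs{\Phi}_i^\trans \mbs{\Phi}_i)$ by $\underline{\nu}(t) := \sum_{i \in \mathcal{F}(t)} \nu_i^2(t)$, the strong-activity hypothesis yields $\underline{\nu}(t) > 0$ for every $t \in [0, T]$.

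The step demanding the most care, and the one I expect to be the main obstacle, is extracting a single constant $\bar{\delta} > 0$ that is independent of both $\mbf{u}_{\text{c}}$ and $T$, as \Cref{def: VSP} requires. \Cref{lemma: eigenvalues } only delivers the \emph{pointwise} positivity $\underline{\nu}(t) > 0$; to conclude $\langle \mbf{u}_{\text{c}}, \mbf{y}_{\text{c}} \rangle_T \geq \bar{\beta} + \bar{\delta} \left\| \mbf{u}_{\text{c}} \right\|^2_{2T}$ I must set $\bar{\delta} = c \, \inf_{t \geq 0} \underline{\nu}(t)$ and argue this infimum is strictly positive. On a fixed compact window $[0, T]$, continuity of $\lambda_{\min}$ in the entries of the (piecewise-continuous) $\mbs{\Phi}_i$ makes $\underline{\nu}$ attain a positive minimum; the delicate point is that a bound valid for \emph{all} $T$ needs $\underline{\nu}$ to be uniformly bounded away from zero, which I would secure by reading \emph{strongly active} as a uniform lower-bound condition (or by adding such a hypothesis) rather than a merely instantaneous one. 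With that in hand, chaining the inequalities gives the ISP certificate $(\bar{\beta}, \bar{\delta})$ and completes the argument.
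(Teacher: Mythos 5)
Your proof is correct and follows essentially the same route as the paper's: expand $\langle \mbf{u}_{\text{c}}, \mbf{y}_{\text{c}}\rangle_T$ via \cref{eq: expanded u.Ty}, apply the per-subcontroller ISP bound, substitute $\mbf{u}_i = \mbs{\Phi}_i \mbf{u}_{\text{c}}$ with the Rayleigh inequality, invoke \Cref{lemma: eigenvalues }, and take the infimum over $t$. The subtlety you flag is genuine: the paper simply asserts $\nu_{\inf} = \inf_{t \in [0,T]} \sum_{i \in \mathcal{F}(t)} \nu_i^2(t) > 0$ in \cref{eqn: ISP final result constants}, which does not follow from pointwise positivity alone for merely piecewise-continuous scheduling matrices (and the resulting constant is $T$-dependent, whereas \Cref{def: VSP} formally asks for a single $\delta$ valid for all $T$), so your suggestion to read ``strongly active'' as a uniform lower-bound condition is the right repair.
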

\begin{proof}
    Substituting into \cref{eq: expanded u.Ty} the ISP version of \cref{eq: VSP} provided in \Cref{def: VSP}, it follows that
    \begin{align} \label{eqn: ISP before Rayleigh}
        \langle \mbf{u}_{\text{c}}, \bar{\bm{\mathcal{G}}} \mbf{u}_{\text{c}}^{}\rangle_T
        =
        \sum_{i \in \mathcal{N}} \alpha_i \left\langle \mbf{u}_i, \mbf{y}_i \right\rangle_{T}
        &\geq
            \sum_{i \in \mathcal{N}} \alpha_i \mleft( \hat{\beta}_i + \delta_i \mleft\| \mbf{u}_i \mright\|^{2}_{2T} \mright) \nonumber
            \\%
        &\geq
            \hat{\beta}
            +
            \delta_{\min}\sum_{i \in \mathcal{N}}\mleft\| \mbf{u}_i \mright\|^{2}_{2T},
    \end{align}
    with \(\hat{\beta}_i \in \mathbb{R}_{\leq 0} \) and \(\delta_i \in \mathbb{R}_{>0} \) for all $i \in \mathcal{N}$, and
    \begin{align} \label{eqn: ISP constants}
        \hat{\beta} &= \sum_{i \in \mathcal{N}} \alpha_i \hat{\beta}_i \leq 0,& \delta_{\min} &= \min_{i \in \mathcal{N}} \alpha_i \delta_i > 0.
    \end{align}
    Substituting \cref{eq:GS i/o u_i} into \cref{eqn: ISP before Rayleigh} and applying the Rayleigh inequality leads to
    \begin{align} \label{eqn: ISP after Rayleigh}
        \langle \mbf{u}_{\text{c}}, \bar{\bm{\mathcal{G}}} \mbf{u}_{\text{c}}^{}\rangle_T
        &\geq
            \hat{\beta }
            +
            \delta_{\min}
            \int_{0}^{T} \sum_{i \in \mathcal{N}}\lambda_{\min}^{(i)}(t)\mleft\| \mbf{u}_{\text{c}}(t)\mright\|_{2}^{2}\, \dt, 
    \end{align}
    where \(\lambda_{\min}^{(i)}(t)\) is the minimum eigenvalue of \(\mbs{\Phi}_{i}^{\trans}(t)\mbs{\Phi}_{i}(t)\). For \(T \in \mathbb{R}_{\geq0}\), provided the scheduling matrices are strongly active, that is, \(\mathcal{F}(t) \neq \varnothing\), $\forall t \in \mleft[ 0, T \mright] $, \Cref{lemma: eigenvalues } can be applied to \cref{eqn: ISP after Rayleigh} by defining \(\nu_{i}(t)\) as the smallest singular value of \(\mbs{\Phi}_{i}(t)\). This results in
    \begin{align} \label{eqn: ISP final result}
        \langle \mbf{u}_{\text{c}}, \bar{\bm{\mathcal{G}}} \mbf{u}_{\text{c}}^{}\rangle_T
        &\geq 
            \hat{\beta }
            +
            \delta_{\min} \int_{0}^{T} \sum_{i \in \mathcal{F}(t)} \nu^{2}_{i}(t)\mleft\| \mbf{u}_{\text{c}}(t)\mright\|_{2}^{2}\, \dt \nonumber
        \\%
        &\geq
            \hat{\beta }
            +
            \delta_{\min} \nu_{\inf} \mleft\| \mbf{u}_{\text{c}}\mright\|_{2T}^{2} =
            \hat{\beta } 
            +
            \hat{\delta} \mleft\| \mbf{u}_{\text{c}} \mright\|^{2}_{2T},
    \end{align}
    with 
    \begin{align} \label{eqn: ISP final result constants}
        \nu_{\inf}    &= \inf_{t \in \left[ 0, T \right]} \sum_{i \in \mathcal{F}(t)} \nu^{2}_{i}(t) > 0
        ,&
        \hat{\delta}  &= \delta_{\min} \nu_{\inf} > 0.
    \end{align}
    \hspace*{\fill}~\QED\par\endtrivlist\unskip
\end{proof}
\begin{theorem} \label{thm: OSP connection is OSP}
    The gain-scheduled controller $\bar{\bm{\mathcal{G}}}$ in \Cref{fig:GS} is OSP if each subcontroller $\bm{\mathcal{G}}_i$ is OSP and the scheduling matrices are active. 
\end{theorem}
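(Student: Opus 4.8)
The plan is to mirror the structure of the proof of \Cref{thm: ISP connection is ISP}, but to work with the output norms rather than the input norms, which reverses the direction of the singular-value bound that is needed. Starting from the identity \cref{eq: expanded u.Ty} and substituting the OSP version of \cref{eq: VSP}, namely $\langle \mbf{u}_i, \mbf{y}_i\rangle_T \geq \beta_i + \varepsilon_i \mleft\| \mbf{y}_i \mright\|^2_{2T}$ with $\delta_i = 0$, I would obtain
\begin{equation*}
    \langle \mbf{u}_{\text{c}}, \bar{\bm{\mathcal{G}}} \mbf{u}_{\text{c}}\rangle_T \geq \hat\beta + \sum_{i \in \mathcal{N}} \alpha_i \varepsilon_i \mleft\| \mbf{y}_i \mright\|^2_{2T} \geq \hat\beta + \varepsilon_{\min} \sum_{i \in \mathcal{N}} \mleft\| \mbf{y}_i \mright\|^2_{2T},
\end{equation*}
where $\hat\beta = \sum_{i \in \mathcal{N}} \alpha_i \beta_i \leq 0$ and $\varepsilon_{\min} = \min_{i \in \mathcal{N}} \alpha_i \varepsilon_i > 0$.

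The crux is then to lower-bound $\sum_{i \in \mathcal{N}} \mleft\| \mbf{y}_i \mright\|^2_{2T}$ by a positive multiple of $\mleft\| \mbf{y}_{\text{c}} \mright\|^2_{2T}$, which is equivalent to establishing an upper bound on $\mleft\| \mbf{y}_{\text{c}} \mright\|^2_{2T}$. This is precisely where the argument diverges from the ISP case: rather than the smallest singular value and full rank (strong activeness), here the \emph{largest} singular value together with the boundedness assumption \cref{eqn: boundedness} does the work. Applying the triangle inequality to \cref{eq:GS i/o y_c} and using $\mleft\| \mbs{\Phi}_i^{\trans}(t) \mright\|_2 = \sigma_i(t)$ gives $\mleft\| \mbf{y}_{\text{c}}(t) \mright\|_2 \leq \sum_{i \in \mathcal{N}} \alpha_i \sigma_i(t) \mleft\| \mbf{y}_i(t) \mright\|_2$, and then the Cauchy-Schwarz inequality in the form $(\sum_{i \in \mathcal{N}} c_i)^2 \leq N \sum_{i \in \mathcal{N}} c_i^2$ yields $\mleft\| \mbf{y}_{\text{c}}(t) \mright\|_2^2 \leq N \sum_{i \in \mathcal{N}} \alpha_i^2 \sigma_i^2(t) \mleft\| \mbf{y}_i(t) \mright\|_2^2$. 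Integrating over $\mleft[ 0, T \mright]$ and bounding $\sigma_i^2(t)$ by its supremum from \cref{eqn: boundedness} produces a finite constant $D = N \max_{i \in \mathcal{N}} \alpha_i^2 \sup_{t \in \mleft[ 0, T \mright]} \sigma_i^2(t)$ satisfying $\mleft\| \mbf{y}_{\text{c}} \mright\|^2_{2T} \leq D \sum_{i \in \mathcal{N}} \mleft\| \mbf{y}_i \mright\|^2_{2T}$.

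Combining the two displays with $\hat\varepsilon = \varepsilon_{\min}/D$ gives $\langle \mbf{u}_{\text{c}}, \bar{\bm{\mathcal{G}}} \mbf{u}_{\text{c}}\rangle_T \geq \hat\beta + \hat\varepsilon \mleft\| \mbf{y}_{\text{c}} \mright\|^2_{2T}$, which is exactly OSP per \Cref{def: VSP}. I expect the main obstacle to be certifying that $\hat\varepsilon$ is strictly positive and finite: finiteness of $D$ is precisely the boundedness assumption \cref{eqn: boundedness}, while $D > 0$ (so that the division is legitimate) is where the \emph{active} hypothesis enters, since activeness guarantees at least one $\mbs{\Phi}_i$ is nonzero on $\mleft[ 0, T \mright]$, hence $\sup_{t} \sigma_i^2(t) > 0$ for some $i$ and thus $D > 0$. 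It is worth emphasizing that full rank is never invoked in this chain, so the weaker \emph{active} condition suffices in place of \emph{strongly active}, which is the natural dual of the largest-versus-smallest singular-value direction used in \Cref{thm: ISP connection is ISP}.
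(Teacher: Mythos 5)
Your proof is correct and follows essentially the same route as the paper's: substitute the OSP inequalities into \cref{eq: expanded u.Ty}, then bound \(\mleft\|\mbf{y}_{\text{c}}\mright\|_{2T}^{2}\) from above by a constant multiple of \(\sum_{i \in \mathcal{N}}\mleft\|\mbf{y}_{i}\mright\|_{2T}^{2}\) using the largest singular values, the boundedness assumption \cref{eqn: boundedness}, and activeness to guarantee the constant is nonzero. The only difference is in how that constant is produced: the paper stacks the \(\mbs{\Phi}_{i}^{\trans}(t)\) into an augmented matrix \(\bm{\Psi}(t)\) and applies the Rayleigh inequality twice to obtain \(\alpha_{\max}^{2}\bar{\sigma}_{\mbs{\Psi}}^{2}\), whereas your triangle-plus-Cauchy--Schwarz argument yields the slightly looser, but equally serviceable, constant \(N\max_{i \in \mathcal{N}}\sup_{t \in \mleft[0,T\mright]}\alpha_{i}^{2}\sigma_{i}^{2}(t)\).
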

\begin{proof}
    By defining the augmented matrices 
    \begin{subequations}
        \begin{gather}
            \begin{align}
                \bm{\Psi}(t) &= 
                \begin{bmatrix}
                    \mbs{\Phi}_{1}(t) \\%
                    \vdots       \\%
                    \mbs{\Phi}_{N}(t)\\%
                \end{bmatrix}^{\trans}
                ,&
                \bm{\upsilon}(t) &= 
                \begin{bmatrix}
                    \mbf{y}_1(t) \\%
                    \vdots       \\%
                    \mbf{y}_N(t) \\%
                \end{bmatrix},
            \end{align}
            \\
            \begin{align}
                \bm{\Lambda} =
                \diag \mleft(\alpha_1 \mbf{1}, \ldots,  \alpha_N \mbf{1}\mright),
            \end{align}
        \end{gather}
    \end{subequations}
    it follows that \cref{eq:GS i/o y_c} can be written as
    \(
        \mbf{y}_{\text{c}}(t) = \bm{\Psi}(t) \bm{\Lambda} \bm{\upsilon}(t)
    \).
    Using the Rayleigh inequality twice, it follows that
    \begin{equation} \label{eqn: yc to yi}
        \mleft\|\mbf{y}_{\text{c}}(t) \mright\|_{2}^{2}
        =
        \mleft\|\bm{\Psi}(t) \bm{\Lambda} \bm{\upsilon}(t) \mright\|_{2}^{2}
        \leq 
        \alpha_{\max}^{2}\sigma_{\mbs{\Psi}}^{2}(t) \mleft\|\bm{\upsilon}(t) \mright\|_{2}^{2},
    \end{equation}
    where \(\sigma_{\mbs{\Psi}}(t)\) is the largest singular value of \(\bm{\Psi}(t)\) and \(\alpha_{\max} = \max_{i \in \mathcal{N}} \alpha_i\). 
    Provided the scheduling matrices are active, then \(\sigma_{\mbs{\Psi}}(t)\in \mathbb{R}_{>0}\). Rearranging \cref{eqn: yc to yi} yields
    \begin{align} \label{eqn: yc to yi after Rayleigh}
        \frac{1}{\alpha_{\max}^{2}\sigma_{\mbs{\Psi}}^{2}(t)} \mleft\| \mbf{y}_{\text{c}}(t)\mright\|_{2}^{2}
        \leq 
        \mleft\|\mbs{\upsilon}(t)\mright\|_{2}^{2} 
        = \sum_{i \in \mathcal{N}} \mleft\| \mbf{y}_{i}(t) \mright\|_{2}^{2}. 
    \end{align}
    Substituting into \cref{eq: expanded u.Ty} the OSP version of \cref{eq: VSP} provided in \Cref{def: VSP}, it follows that 
    \begin{align} \label{eqn: OSP before Rayleigh}
        \hspace{-5pt}
        \langle \mbf{u}_{\text{c}}, \bar{\bm{\mathcal{G}}} \mbf{u}_{\text{c}}^{}\rangle_T
        =
        \sum_{i \in \mathcal{N}} \alpha_i \left\langle \mbf{u}_i, \mbf{y}_i \right\rangle_{T}
        &\geq
            \sum_{i \in \mathcal{N}} \alpha_i \mleft(\bar{\beta}_i + \varepsilon_i\mleft\| \mbf{y}_{i} \mright\|^{2}_{2T} \mright) \nonumber
        \\%
        &\geq
            \bar{\beta } 
            +
            \varepsilon_{\min}\sum_{i \in \mathcal{N}}  \mleft\| \mbf{y}_{i} \mright\|^{2}_{2T},
    \end{align}
    with \(\bar{\beta}_i \in \mathbb{R}_{\leq 0} \) and \(\varepsilon_i \in \mathbb{R}_{>0} \) for all $i \in \mathcal{N}$, and
    \begin{align}\label{eqn: OSP constants}
        \bar{\beta}         &= \sum_{i \in \mathcal{N}} \alpha_i \bar{\beta}_i \leq 0,&
        \varepsilon_{\min}  &= \min_{i \in \mathcal{N}} \alpha_i \varepsilon_i > 0.
    \end{align}
    Substituting \cref{eqn: yc to yi after Rayleigh} into \cref{eqn: OSP before Rayleigh} leads to
    \begin{align}\label{eqn: OSP final result}
        \langle \mbf{u}_{\text{c}}, \bar{\bm{\mathcal{G}}} \mbf{u}_{\text{c}}^{}\rangle_T
        &\geq
            \bar{\beta} 
            +
            \varepsilon_{\min} \int_{0}^{T} \frac{1}{\alpha_{\max}^{2}\sigma_{\mbs{\Psi}}^{2}(t)} \mleft\| \mbf{y}_{\text{c}} (t)\mright\|^{2}_{2}\, \dt\nonumber
        \\%
        &\geq
            \bar{\beta } 
            +
            \frac{\varepsilon_{\min}}{\alpha_{\max}^{2}\bar{\sigma}_{\mbs{\Psi}}^{2}} \mleft\| \mbf{y}_{\text{c}}\mright\|^{2}_{2T} =
            \bar{\beta } 
            +
            \bar{\varepsilon} \mleft\| \mbf{y}_{\text{c}} \mright\|^{2}_{2T},
    \end{align}
    with 
    \begin{align} \label{eqn: OSP final result constants}
        \bar{\sigma}_{\mbs{\Psi}} &= \sup_{t \in \left[ 0, T \right]} \sigma_{\mbs{\Psi}}(t) > 0,
        &
        \bar{\varepsilon} = \frac{\varepsilon_{\min}}{\alpha_{\max}^{2}\bar{\sigma}_{\mbs{\Psi}}^{2}} > 0.
    \end{align}
    \hspace*{\fill}~\QED\par\endtrivlist\unskip
\end{proof}
\subsection{Discussion}
Given that the $N$ VSP subcontrollers \(\bm{\mathcal{G}}_1, \bm{\mathcal{G}}_2, \ldots, \bm{\mathcal{G}}_N\) in \Cref{fig:GS} are assumed to be VSP, they are also ISP and OSP simultaneously. Consider the gain-scheduled controller $\bar{\bm{\mathcal{G}}}$ in \Cref{fig:GS}. The condition required for  $\bar{\bm{\mathcal{G}}}$ to be ISP, as stated in \Cref{thm: ISP connection is ISP}, is more restrictive than the OSP case in \Cref{thm: OSP connection is OSP}, since the existence of a full rank scheduling matrix at all times also implies the existence of a nonzero scheduling matrix at all times. Therefore, the matrix-gain-scheduling of $N$ VSP subcontrollers as per \Cref{fig:GS} satisfies \Cref{thm: ISP connection is ISP} and \Cref{thm: OSP connection is OSP} simultaneously provided the scheduling matrices are strongly active. Consequently, combining \cref{eqn: ISP final result} and \cref{eqn: OSP final result} provides
\begin{equation*}
    \langle \mbf{u}_{\text{c}}, \bar{\bm{\mathcal{G}}} \mbf{u}_{\text{c}}^{}\rangle_T
    \geq
    \frac{\hat{\beta} + \bar{\beta}}{2}
    +
    \frac{\hat{\delta}}{2} \mleft\| \mbf{u}_{\text{c}} \mright\|^{2}_{2T}
    +
    \frac{\bar{\varepsilon}}{2} \mleft\| \mbf{y}_{\text{c}} \mright\|^{2}_{2T},
\end{equation*}
with \(\hat{\beta}\), \(\hat{\delta} \), \(\bar{\beta}\), and \(\bar{\varepsilon} \) defined in \cref{eqn: ISP constants}, \cref{eqn: ISP final result constants}, \cref{eqn: OSP constants}, and \cref{eqn: OSP final result constants}, respectively.

As required in \Cref{thm: ISP connection is ISP}, at all times, there must be at least one full rank scheduling matrix for the gain-scheduled controller \(\bar{\bm{\mathcal{G}}}\) to be ISP. To elaborate, assume at time \(t \in \mleft[ 0, T \mright] \), with \(T \in \mathbb{R}_{\geq0}\), the scheduling matrix \(\mbs{\Phi}_j(t)\) is full rank for some \(j \in \mathcal{F}(t)\). Then, \(\alpha_j \mbs{\Phi}(t)_{j}^{\trans} \mbf{y}_{j}(t)\) and \(\mbf{u}_j(t)\) are both nonzero, provided that \(\mbf{u}_{\text{c}}(t)\) and \(\mbf{y}_{j}(t)\) are nonzero.
This can be thought of as a direct extension of \cite[Theorem 1]{Damaren_passive_map}, where it is required for at least one scheduling signal to be nonzero at all times. Additionally, the gain-scheduled ISP coefficient in \cite{Damaren_passive_map, Forbes_Damaren} is a special case of \(\hat{\delta}\) in \cref{eqn: ISP final result constants}, where for a scalar scheduling signal \(s_i(t) \in \mathbb{R}\), its smallest singular value is \(|s_i(t)|\), and with \(\alpha_i = 1\), \cref{eqn: ISP final result constants} leads to \(\hat{\delta} =  \inf_{t \in \mleft[ 0, T \mright]} \sum_{i \in \mathcal{F}(t)} s_{i}^{2}(t) \delta_{\min}\), with \(\delta_{\min} = \min_{i \in \mathcal{N}} \delta_i\). 

In \cite[Theorem 5.2]{Forbes_Damaren}, to show that a gain-scheduled controller composed of a family of VSP controllers possesses finite gain, the scalar scheduling signals are required to be bounded as \(s_i(t)\in \mathcal{L}_{\infty}\). Similarly, in \Cref{thm: OSP connection is OSP}, \(\sigma_{\mbs{\Psi}}^{2}(t)\) is required to be nonzero and finite for all \(t \in \mleft[ 0, T \mright]\). The scheduling matrices being active guarantees \(\sigma_{\mbs{\Psi}}^{2}(t) \in \mathbb{R}_{>0}\). Moreover, for the symmetric positive semi-definite matrix \(\bm{\Psi}^{\trans} (t) \bm{\Psi}(t)\) it follows that
\begin{align*}
    \sigma_{\mbs{\Psi}}^{2}(t)
    &\leq \trace{\mleft( \bm{\Psi}^{\trans} (t) \bm{\Psi}(t) \mright) }
    = 
        \sum_{i \in \mathcal{N}} \trace{\mleft(\mbs{\Phi}_{i}^{\trans} (t)\mbs{\Phi}_{i}(t)\mright)}
    \\%
    &\leq 
        n \sum_{i \in \mathcal{N}} \lambda_{\max}\mleft( \mbs{\Phi}_{i}^{\trans} (t)\mbs{\Phi}_{i}(t) \mright)
    = 
        n \sum_{i \in \mathcal{N}} \left\| \mbs{\Phi}_i(t) \right\|_{2}^{2}.
\end{align*}
Since the scheduling matrices are assumed to be bounded as per \cref{eqn: boundedness}, it follows that 
\begin{align*}
    \bar{\sigma}_{\mbs{\Psi}}^{2} 
    = 
    \sup_{t \in \left[ 0, T \right]} \sigma_{\mbs{\Psi}}^{2}(t)
    &\leq
    n \sum_{i \in \mathcal{N}} \sup_{t \in \mleft[ 0, T \mright]}\mleft\| \mbs{\Phi}_{i}(t) \mright\|_{2}^{2} < \infty.
\end{align*}

Additionally, as discussed in \cite[Proposition 2.11]{Sepulchre}, the input-output modification of the subcontrollers described in \cref{eq:GS i/o} does not violate passivity. This input-output modification is being used in a novel way to gain-schedule subcontrollers using scheduling matrices. 
    \section{Application Example} \label{sec: simulation}
\begin{figure}[t]
    \vspace{2pt}
    \centering
    \includegraphics{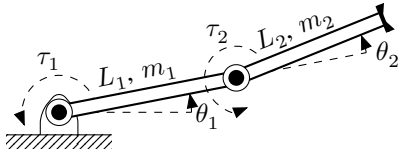}
    \vspace{-5pt}
    \caption{Rigid two-link robotic manipulator with joint angles $\theta_1$ and $\theta_2$ and joint torques $\tau_1$ and $\tau_2$.}
    \label{fig:robot}
\end{figure}
\begin{table}[t]
    \vspace{3pt}
    \caption{Two-Link Manipulator Properties}
    \vspace{-7pt}
    \label{tab: Manipulator Properties}
    \small
    \begin{tabularx}{\linewidth}{l>{\centering\arraybackslash}X>{\centering\arraybackslash}X}
        \hline
        \hline
        \addlinespace[2pt]
        {\small Link Parameters}
        & 
        {\small Link 1}
        &
        {\small Link 2} 
        \\ 
        \hline
        \addlinespace[2pt]
        Length {$\si{[\meter]}$}                          & {\(\,L_1       = 1.10\)} & {\(\,L_2       = 0.85\)} \\
        Measured Length {$\si{[\meter]}$}                 & {\(\,\bar{L}_1 = 1.08\)} & {\(\,\bar{L}_2 = 0.83\)} \\
        Mass {$\si{[\kilo\gram]}$}                        & {\(m_1       = 0.40\)}   & {\(m_2       = 0.90\)} \\
        Measured Mass {$\si{[\kilo\gram]}$}               & {\(\bar{m}_1 = 0.44\)}   & {\(\bar{m}_2 = 0.99\)} \\
        \hline
        \hline
    \end{tabularx}
    \vspace{-9pt}
\end{table}
\subsection{Plant Description}
To demonstrate the benefits of scheduling matrices within the framework of \Cref{fig:GS}, the control of a rigid two-link robotic manipulator is considered, as shown in \Cref{fig:robot}. For simplicity, the robot is assumed to be planar, with a fixed base, and no forces acting on the end-effector \cite[177-181]{Craig}. The equations of motion of the two-link robot are given by
\begin{equation} \label{eq:robot_dynamics}
    \mbf{M}(\mbf{q}(t))\ddot{\mbf{q}}(t) = 
    \mbf{f}_{\text{non}}(\mbf{q}(t), \dot{\mbf{q}}(t))
    +
    \mbf{u}(t), 
\end{equation}
where \(\mbf{M}(\mbf{q}(t)) = \mbf{M}^{\trans}(\mbf{q}(t)) \succ 0\) is the mass matrix, \(\mbf{f}_{\text{non}}(\mbf{q}(t), \dot{\mbf{q}}(t))\) captures the nonlinear inertial and Coriolis forces, \(\mbf{u}(t) = \begin{bmatrix} \tau_1(t) & \tau_2(t) \end{bmatrix}^{\trans}\) are the joint torques, and \(\mbf{q}(t) = \begin{bmatrix} \theta_1(t) & \theta_2(t) \end{bmatrix}^{\trans}\) are the generalized coordinates. The passive map associated with \cref{eq:robot_dynamics} is joint torques to joint rates, which is \(\mbf{u}(t) \to \dot{\mbf{q}}(t)\).
\subsection{Trajectory}
\begin{table}[t]
    \centering
    \vspace{4pt}
    \caption{Discrete Joint Angles for Trajectory Generation}
    \vspace{-7pt}
    \label{tab:trajectory points}
    \small
    \begin{tabularx}{\linewidth}{*{2}{>{\centering\arraybackslash}X}}
        \hline
        \hline
        \addlinespace[2pt]
        {\small Discrete Time Points \(t_k\)} 
        & 
        {\small Desired Joint Angle \( \bm{\theta}_\text{d}(t_{k})\)} 
        \\
        {\small $\si{[s]}$} 
        & 
        {\small $\si{[deg]}$} 
        \\
        \addlinespace[1pt]
        \hline
        \addlinespace[2pt]
        \(t_0 = 0.0,\,t_1 = 0.5 \) & \(\begin{bmatrix} -90^\circ & 150^\circ \end{bmatrix}^{\trans} \) \\
        \(t_2 = 1.0,\,t_3 = 2.0\) & \(\begin{bmatrix} -60^\circ & \phantom{1}90^\circ \end{bmatrix}^{\trans} \) \\
        \(t_4 = 3.0 \) & \(\begin{bmatrix} \phantom{-}45^\circ  & \phantom{1}60^\circ \end{bmatrix}^{\trans} \) \\
        \(t_5 = 5.0 \) & \(\begin{bmatrix} \phantom{-}60^\circ  & \phantom{1}45^\circ \end{bmatrix}^{\trans} \) \\
        \(t_6 = 6.0,\,t_7 = 6.5 \) & \(\begin{bmatrix} \phantom{1}90^\circ  & -60^\circ \end{bmatrix}^{\trans} \) \\
        \(t_8 = 7.5,\,t_9 = 8.5 \) & \(\begin{bmatrix} 150^\circ & -90^\circ \end{bmatrix}^{\trans} \) \\
        \addlinespace[2pt]
        \hline
        \hline
    \end{tabularx}
    \vspace{-10pt}
\end{table}
The control objective is to have the two-link robot track a position and rate trajectory. The position trajectory is \(\bm{\theta}_\text{d}(t) = \begin{bmatrix} \theta_{\text{d}, 1}(t) & \theta_{\text{d}, 2}(t) \end{bmatrix}^{\trans}\), and the rate trajectory is  \(\dot{\bm{\theta}}_\text{d}(t)\). This is achieved by choosing discrete joint angles \(\bm{\theta}_\text{d}(t_k)\) and \(\bm{\theta}_\text{d}(t_{k+1})\) at times \(t_k\) and \(t_{k+1}\), and interpolating between them as such
\begin{subequations}\label{eqn:trajectory_interpolation}
    \begin{gather}
        \begin{align}
            \eta(t)
            &= \frac{t - t_k}{t_{k+1} - t_k},&
            p_5(t) 
            &= 6 \eta^5 - 15\eta^4 + 10\eta^3,
        \end{align}
        \\%
        \bm{\theta}_\text{d}(t) 
        = 
        p_5(t) \mleft(\bm{\theta}_\text{d}(t_{k+1}) - \bm{\theta}_\text{d}(t_{k})\mright) 
            +
            \bm{\theta}_\text{d}(t_{k}).
    \end{gather}
\end{subequations}
As shown in \Cref{tab:trajectory points}, the desired discrete joint angles are chosen such that the joint angles operate within \(\left[-90^\circ, 150^\circ\right]\).
\subsection{SPR Control Synthesis}
The subcontrollers to be gain scheduled will be SPR controllers with feedthrough, which are in turn VSP \cite{Marquez}. There are many ways to synthesize SPR controllers. In \cite{Forbes_Damaren}, SPR control synthesis is achieved through construction of SPR transfer functions for a given Hurwitz polynomial \cite{Damaren}. Others solve a convex optimization problem subject to linear matrix inequalities (LMIs) to synthesize SPR controllers \cite{Gapski, Takashi, Forbes_Dual, Forbes_dilated}. Herein, the Kalman-Yakubovich-Popov (KYP) lemma \cite{KYP} and gain matrix \(\mbf{K}\) from the linear-quadratic regulator (LQR) problem are used to synthesize the SPR controllers, as is similarly done in \cite{Benhabib,Damaren_passive_map, walsh}. 

The LQR problem requires a linearized version of \cref{eq:robot_dynamics}. Given the form of the mass matrix shown in \cite[180]{Craig}, the nonlinearity of $\mbf{M}(\mbf{q}(t))$ comes from the $\cos(\theta_2)$ term. Therefore, to cover the range of possible joint angles during the desired trajectory, three linearization points are chosen with \(\theta_{i,2} \in \{150^\circ, 60^\circ, -90^\circ\}\). Since the SPR controller is a rate-based controller, a proportional control prewrap is then added to the system to control the joint displacements of the system. This prewrap does not violate the passive map of the system \cite{Damaren_passive_map}. The linearization of the prewrapped model about \(\bar{\mbf{q}}_{i} = \begin{bmatrix} 0 & \theta_{i,2} \end{bmatrix}^{\trans}\) is given by
\begin{align} \label{eq:linearized_model}
        \delta \dot{\mbf{x}}(t)
        &= 
        \mbf{A}_{i} \delta \mbf{x}(t)
            +
        \mbf{B}_{i}
        \delta \mbf{u}(t),
        &
        \delta \mbf{y}(t) &= \mbf{C}_{i} \delta \mbf{x}(t),
\end{align}
with \(\delta \mbf{x}(t) = \begin{bmatrix}
    \delta \mbf{q}(t) & \delta \dot{\mbf{q}}(t)
\end{bmatrix}^{\trans}\) and 
    \begin{align*}
        \mbf{A}_{i}
        &= 
            \begin{bmatrix}
                \mbf{0}                               & \eye \\%
                -\bar{\mbf{M}}^{-1}(\bar{\mbf{q}}_{i})\mbf{K}_{\text{p}} & \mbf{0}
            \end{bmatrix}, &
            \hspace{-4pt}
            \mbf{B}_{i} 
            &=
            \begin{bmatrix}
                \mbf{0}  \\%
                \bar{\mbf{M}}^{-1}(\bar{\mbf{q}}_{i})
            \end{bmatrix}, & 
            \hspace{-4pt}
            \mbf{C}_{i}
            &= 
            \begin{bmatrix} 
                \mbf{0} \\
                \eye 
            \end{bmatrix}^{\trans}\hspace{-3pt},
    \end{align*}
where \(\bar{\mbf{M}}\mleft( \bar{\mbf{q}}_i \mright) \) is the measured mass matrix constructed using the measured link lengths and masses in \Cref{tab: Manipulator Properties} and \(\mbf{K}_{\text{p}}\) is the proportional gain matrix in \Cref{tab: Controller Properties}.
\begin{table}[t]
    \vspace{4pt}
    \caption{Controller Design Parameters}
    \vspace{-7pt}
    \label{tab: Controller Properties}
    \small
    \begin{tabularx}{\linewidth}{l>{\centering\arraybackslash}p{0.17\linewidth}>{\centering\arraybackslash}p{0.445\linewidth}}
        \hline
        \hline
        \addlinespace[2pt]
        {\small Properties} &
        {\small Symbol} &
        {\small Value} \\ 
        \hline
        \addlinespace[2pt]
        Proportional Gain               & {\(\mbf{K}_{\text{p}}\)} & {\(\diag \mleft(35, 35 \mright)\)} \\
        \addlinespace[2pt]
        \hline
        \addlinespace[2pt]
        \multirow{2}{*}{LQR Weights} & {\(\mbf{Q}_\text{LQR}\)} & {\(\diag \mleft(0.33, 0.25, 180, 180\mright)^{-2}\)} \\
                                     & {\(\mbf{R}_\text{LQR}\)} & {\(\diag \mleft(15, 15 \mright)^{-2}\)} \\
        \addlinespace[2pt]
        \hline
        \addlinespace[2pt]
        Feedthrough & {\(\delta\)} & {0.0001} \\
        \hline
        \hline
    \end{tabularx}
    \vspace{-10pt}
\end{table}
Additionally, the LQR problem's state and input weight matrices, \(\mbf{Q}_\text{LQR}\) and \(\mbf{R}_\text{LQR}\), are tabulated in \Cref{tab: Controller Properties} following Bryson's rule \cite{brysons}.

Using the linearized model in \cref{eq:linearized_model} with the LQR state and input weight matrices in \Cref{tab: Controller Properties}, the gain matrix \(\mbf{K}_{i}\) is computed for each linearization point by solving the algebraic Riccati equation (ARE) \cite{brysons}. The SPR control synthesis is then completed similar to \cite{Damaren_passive_map, Benhabib, walsh} by using the KYP lemma to set
\begin{align*}
    \mbf{A}_{\text{c}, i} &= \mbf{A}_{i} - \mbf{B}_{i}\mbf{K}_{i}, &
    \mbf{C}_{\text{c}, i} &= \mbf{K}_{i}, &
    \mbf{B}_{\text{c}, i} &=\mbf{P}_{i}^{-1} \mbf{K}_{i}^{\trans}, 
\end{align*}
where \(\mbf{P}_{i} = \mbf{P}_{i} ^{\trans} \succ 0 \) is the solution to the Lyapunov equation,
\(
    \mbf{A}_{\text{c}, i}^{\trans}\mbf{P}_{i} + \mbf{P}_{i}\mbf{A}_{\text{c}, i} = -\mbf{Q}_{i}
\), for \(\mbf{Q}_{i}  = \mbf{Q}_{i}^{\trans} \succ 0\).

Finally, an SPR controller by itself is not VSP. However, an SPR controller in a parallel feedforward connection with an arbitrary constant gain \(\delta \in \mathbb{R}_{>0}\) is VSP \cite{Marquez}. Therefore, for each linearization point \(\bar{\mbf{q}}_{i}\), a VSP controller, \(\bm{\mathcal{G}}_i : \mathcal{L}_{2e} \to \mathcal{L}_{2e}\), can be synthesized with the state-space form
\begin{align*}
    \dot{\mbf{x}}_i(t) &= \mbf{A}_{\text{c}, i}  \mbf{x}_i(t) + \mbf{B}_{\text{c}, i}  \mbf{u}_i(t) ,
    &
    \mbf{y}_i(t) &= \mbf{C}_{\text{c}, i}  \mbf{x}_i(t) + \mbf{D}_{\text{c}}  \mbf{u}_i(t) ,
\end{align*}
where \(\mbf{D}_{\text{c}} = \delta \eye\), with \(\delta\) in \Cref{tab: Controller Properties}.
\subsection{Scheduling Signals} \label{subsec: scheduling_signals}
Historically, gain-scheduled controllers have used linear scheduling signals. However, herein, fourth degree polynomials are used as scheduling signals within the scheduling matrices. For the three linearization points \(\bar{\mbf{q}}_{1}\), \(\bar{\mbf{q}}_{2}\), and \(\bar{\mbf{q}}_{3}\), the scheduling signals in \Cref{fig:signals} are defined as
\begin{subequations} \label{eq:scalar_scheduling_signals}
    \begin{align}
        s_1(t) &= 
        \begin{cases}
            1 - \left(\frac{t}{3}\right)^4 & \qquad \phantom{--} 0.0 \leq t \leq 3.0, \\
            0                              & \qquad \phantom{--} 3.0 < t,
        \end{cases} \\
        s_2(t) &= 
        \begin{cases}
            1 - \left(\frac{t-3}{2.8}\right)^4 &\qquad \phantom{-}  0.2 \leq t \leq 5.8 ,\\
            0                                  &\qquad \phantom{-}\text{otherwise},
        \end{cases} \\
        s_3(t) &= 
        \begin{cases}
            0                                    & \qquad  0.0 \leq t < 5.0, \\
            1 - \left(\frac{t-7.5}{2.5}\right)^4 & \qquad  5.0 \leq t \leq 7.0, \\
            1                                    & \qquad  7.0 < t.
        \end{cases}
    \end{align}
\end{subequations}
\begin{figure}[t]
    \centering
    \vspace{5pt}
    \includegraphics[width=0.485\textwidth]{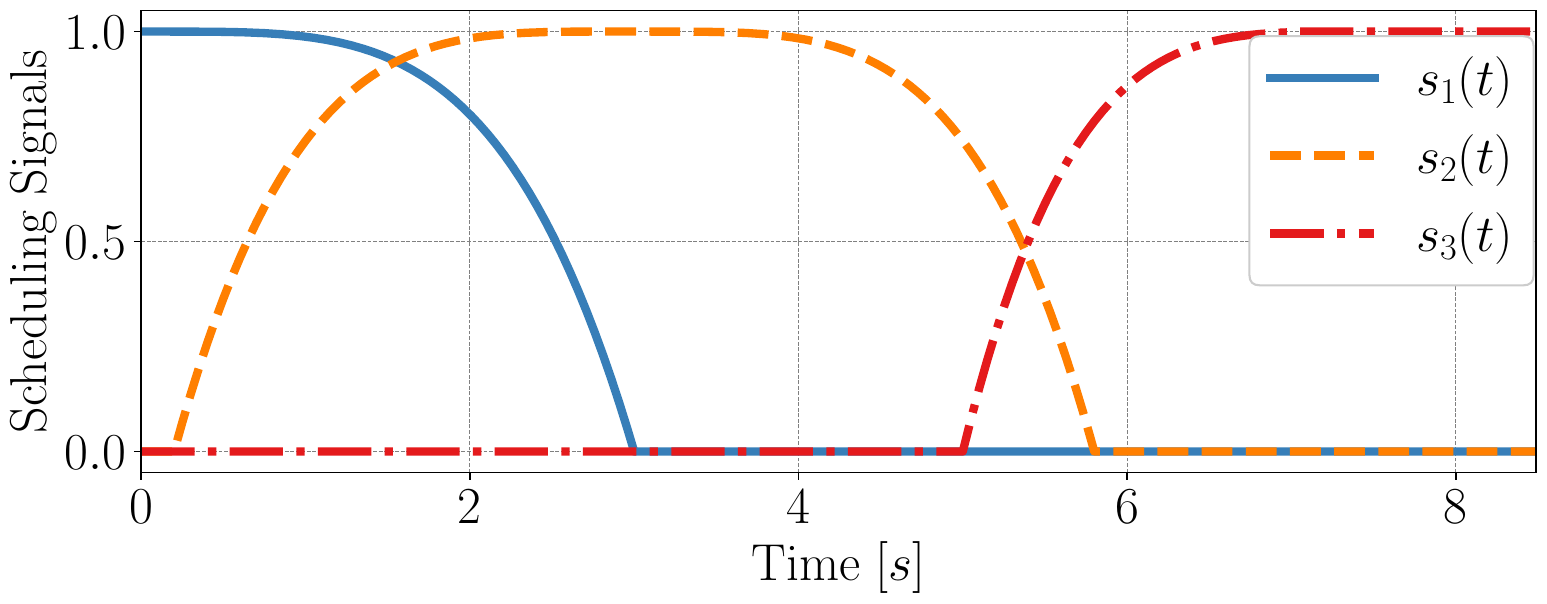}
    \vspace{-20pt}
    \caption{Scalar scheduling signals $s_1(t)$, $s_2(t)$, and $s_3(t)$ defined in \cref{eq:scalar_scheduling_signals}.}
    \label{fig:signals}
    \vspace{-10pt}
\end{figure}
Note that after $T = 7 < t$, $s_3 = 1$, while the other signals are zero. Additionally, all scheduling signals are bounded, and at all times, at least one scheduling signal is active, meaning \(s_i \in \mathcal{L}_{2e}\), \(\forall i \in \mathcal{N} = \{1, 2, 3\}\), and \(\sum_{i \in \mathcal{N}} s_i(t) > 0\) \(\forall t \in \mathbb{R}_{\geq 0}\). Therefore, as required in \cite{Forbes_Damaren}, $s_1$, $s_2$, and $s_3$ are valid scalar signals to preserve the VSP property of the gain-scheduled subcontrollers. 

As per \Cref{fig:GS}, for \(\mbf{u} : \mathbb{R}_{\geq 0} \to \mathbb{R}^{2}\), the scheduling of each subcontroller \(\bm{\mathcal{G}}_i\) requires five hyperparameters: one \(\alpha_i\), and four scheduling signals for the scheduling matrix \(\mbs{\Phi}_i\). Using three subcontrollers, one such set of scheduling matrices are 
\vspace{-4pt}
\begin{subequations} \label{eqn: matrix_scheduling_signals}
    \begin{align}
        \mbs{\Phi}_1(t) &= \begin{bmatrix} 
                            \mu_{1}s_1(t) + \nu_{1}s_2(t) & 0 \\ 
                            0                             & s_1(t) 
                        \end{bmatrix}, &\alpha_1 = 2,\\%
        \mbs{\Phi}_2(t) &= \begin{bmatrix} 
                            s_2(t) & 0 \\
                            s_2(t) & s_2(t) 
                        \end{bmatrix}, &\alpha_2 = 1,\\%
        \mbs{\Phi}_3(t) &= \begin{bmatrix} 
                            \mu_{2}s_3(t) + \nu_{2}s_2(t)   & 0 \\ 
                            0                               & s_3(t) 
                        \end{bmatrix}, &\alpha_3 = 2,
    \end{align} 
\end{subequations}
with \(\mu_{1} = 2\), \(\nu_{1} = 4\), \(\mu_{2} = 1\), and \(\nu_{2} = 2\), where \(s_1(t)\), \(s_2(t)\), and \(s_3(t)\) are defined in \cref{eq:scalar_scheduling_signals}. The scheduling matrices in \cref{eqn: matrix_scheduling_signals} are deliberately chosen to be diagonal or lower triangular, so that they are full rank, provided nonzero diagonal elements. It can easily be verified that \(\forall t \in \mathbb{R}_{\geq 0}\), \(\exists i \in \mathcal{N}\), such that \(\rank \mleft(\mbs{\Phi}_i(t) \mright) = n = 2\), provided \(\mu_1, \mu_2, \nu_1, \nu_2 > 0\). Furthermore, to highlight the impact of the off-diagonal element in \(\mbs{\Phi}_2(t)\) on \(\tau_2\), the second diagonal entry of each scheduling matrix is kept as its scalar counterpart, \(s_i(t)\).
Finally, output scheduling matrices are scaled by \(\alpha_i\), to demonstrate the effect of the scaling factor on the performance of the gain-scheduled controllers in \Cref{fig:GS}.
\begin{figure}[t]
    \vspace{10pt}
    \centering
    \includegraphics{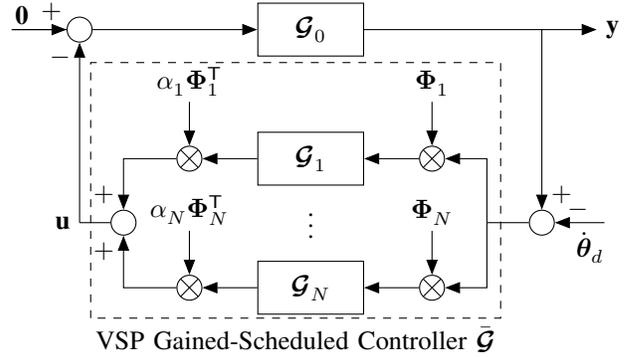}
    \vspace{-10pt}
    \caption{Gain-scheduled feedback control of the plant to be controlled \(\bm{\mathcal{G}}_0\), prewrapped with proportional control, and the gain-scheduled controller $\bar{\bm{\mathcal{G}}}$.}
    \label{fig:GS feedback control system}
    \vspace{-10pt}
\end{figure}
\subsection{Comparison}
Consider the rigid two-link planner robotic manipulator in \Cref{fig:robot} and its equation of motion \cref{eq:robot_dynamics}. Three different control approaches are compared with the objective of following the trajectory given by \cref{eqn:trajectory_interpolation}. As a baseline, a single VSP controller is designed about the linearization of the robot at the end of its trajectory. This corresponds to using the third linearization point \(\bar{\mbf{q}}_3\) in the linearized model \cref{eq:linearized_model}, and will be referred to as the unscheduled controller henceforth. The second approach, referred to as the scalar gain-scheduled (GS) controller, is presented in \cite{Forbes_Damaren}. In particular, \cite{Forbes_Damaren} gain-schedules three VSP subcontrollers, \(\bm{\mathcal{G}}_1\), \(\bm{\mathcal{G}}_2\), and \(\bm{\mathcal{G}}_3\), designed about the linearization points \(\bar{\mbf{q}}_1\), \(\bar{\mbf{q}}_2\), and \(\bar{\mbf{q}}_3\), using the scalar scheduling signals in \cref{eq:scalar_scheduling_signals}. These subcontrollers are then gain-scheduled as per \Cref{fig:GS}, in the parallel interconnection shown in \Cref{fig:GS feedback control system}, just with \(\mbs{\Phi}_i(t) = s_i(t) \eye\) and \(\alpha_i = 1\) for \(i \in \mathcal{N}\). The third approach, referred to as the matrix GS controller, only differs from the scalar gain-scheduled controller in that the scheduling matrices in \cref{eqn: matrix_scheduling_signals} are used instead. Note, across all three control approaches, the exact same \(\mbf{K}_{\text{p}}\), \(\mbf{Q}_{\text{LQR}}\), \(\mbf{R}_{\text{LQR}}\), and \(\delta\) are used for the synthesis of the VSP subcontrollers.
\afterpage{
    \begin{figure}[t!]
        \vspace{6pt}
        \centering
        \includegraphics[width=0.485\textwidth]{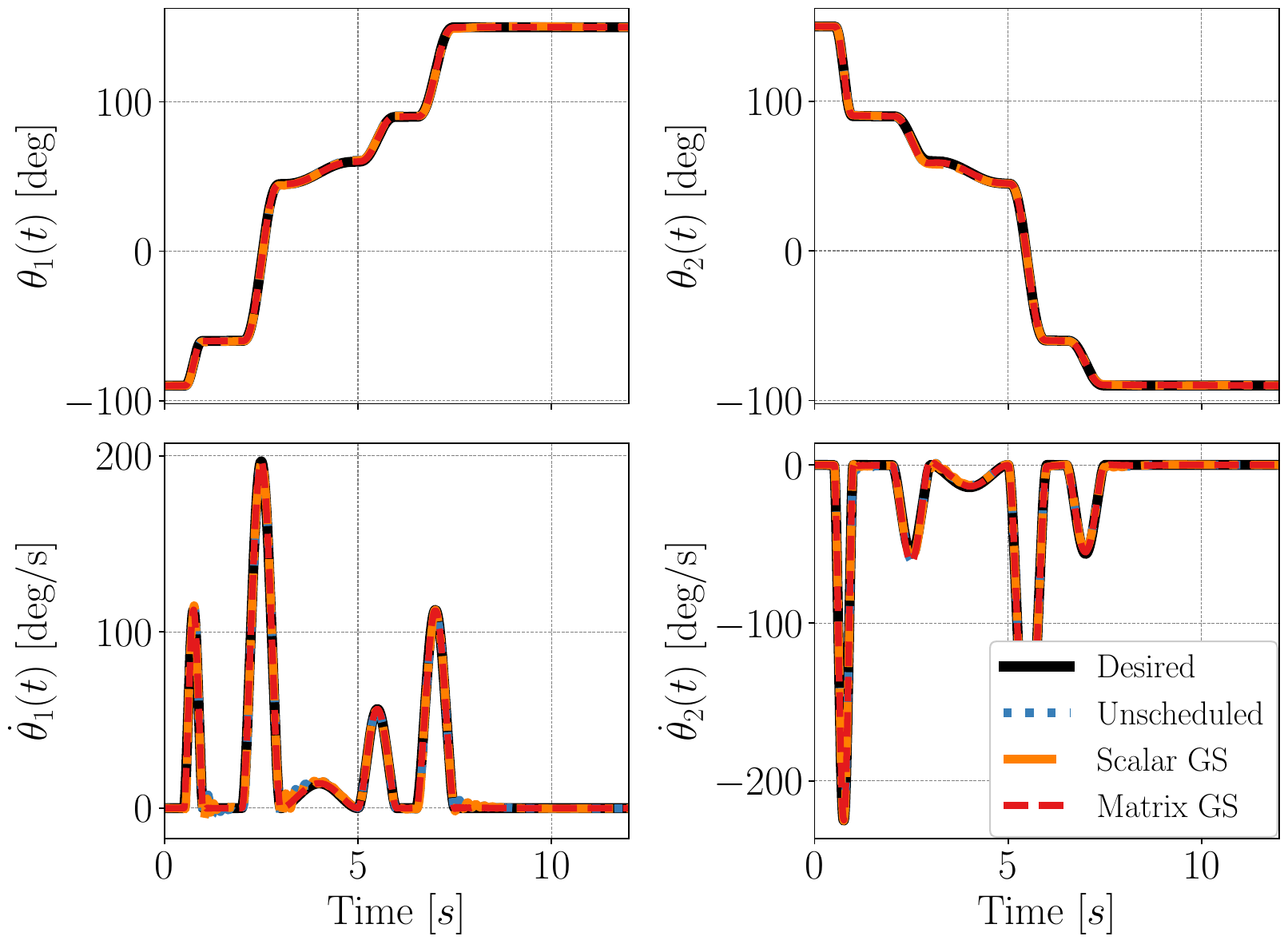}
        \vspace{-15pt}
        \caption{Comparison of joint angles and joint rates.}
        \vspace{20pt}
        \label{fig:trajectory comparison}
        \includegraphics[width=0.485\textwidth]{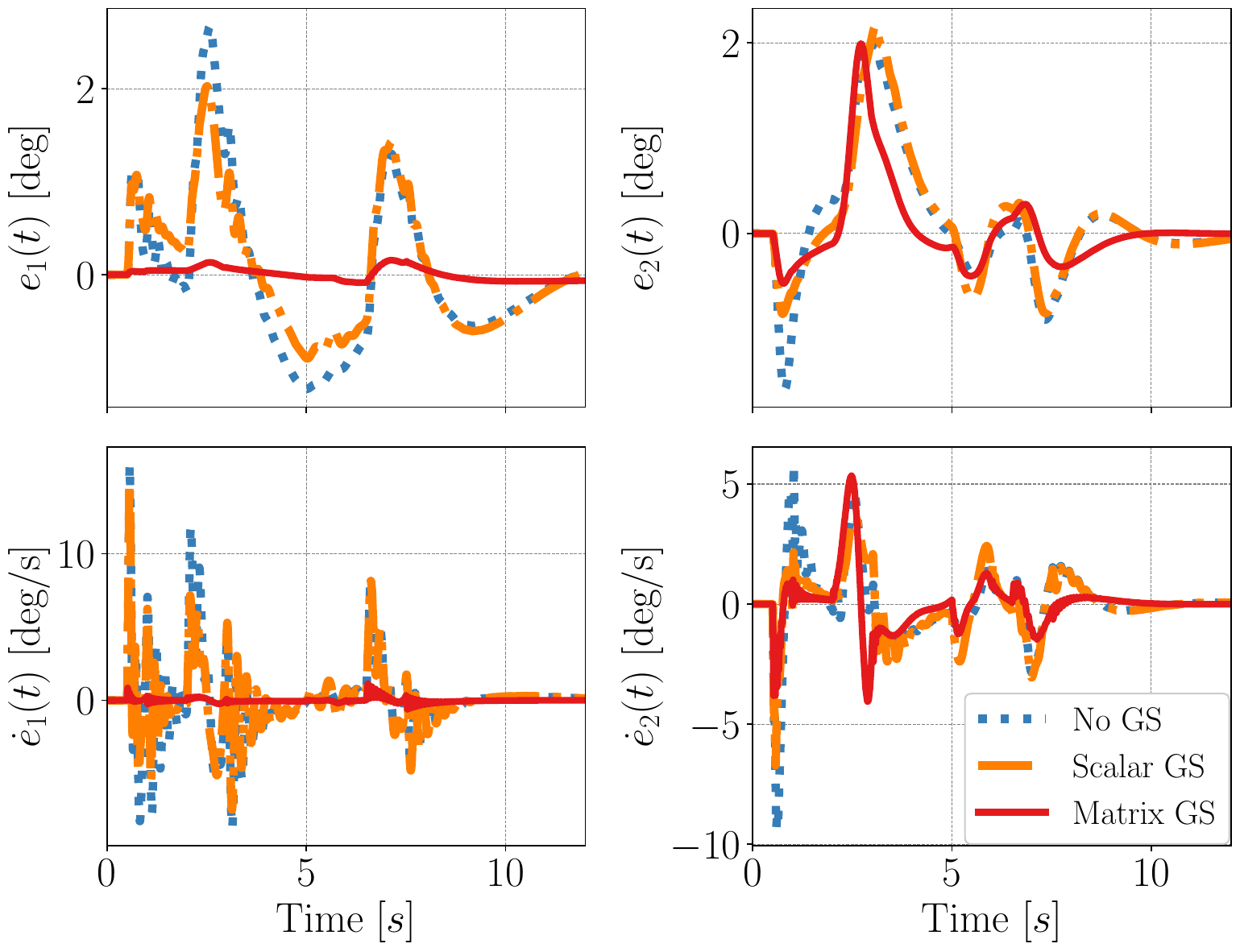}
        \vspace{-15pt}
        \caption{Comparison of joint angles errors and error rates.}
        \vspace{20pt}
        \label{fig:error comparison}
        \includegraphics[width=0.485\textwidth]{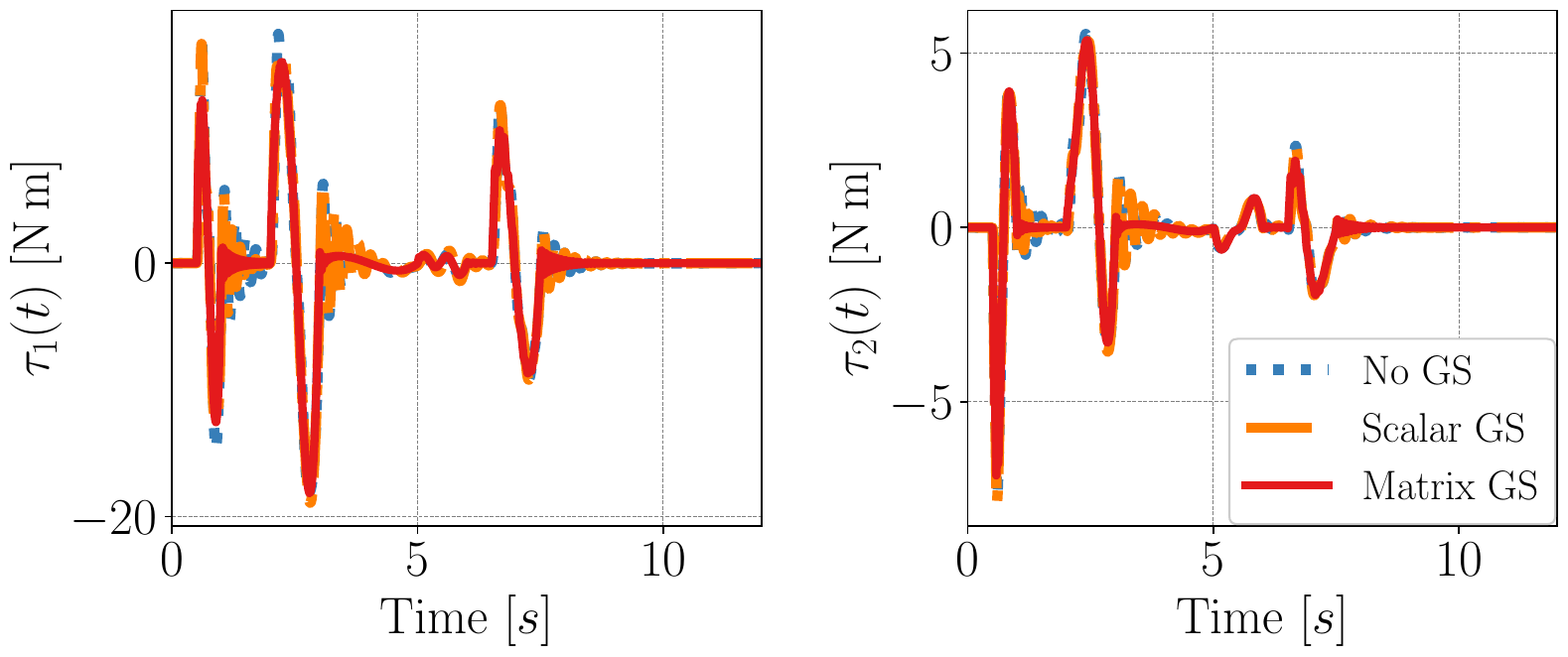}
        \vspace{-15pt}
        \caption{Comparison of joint torques.}
        \label{fig:torques comparison}
        \vspace{-20pt}
    \end{figure}
}
    \begin{table}[b]
        \vspace{-10pt}
        \caption{RMS Error of Joint Angle and Joint Angle Rate}
        \label{tab:RMS error}
        \vspace{-7pt}
        \small
        \begin{tabularx}{\linewidth}{l *{4}{S[table-format=1.4]}}
            \hline
            \hline
            & 
            \multicolumn{2}{c}{RMS angle error}
            &
            \multicolumn{2}{c}{RMS angle rate error} 
            \\
            \multirow[b]{2}{*}{Control method} 
            & 
            \multicolumn{2}{c}{$\si{[deg]}$}
            &
            \multicolumn{2}{c}{$\si{[deg/s]}$} 
            \\
            \cmidrule(l){2-3}
            \cmidrule(l){4-5}
            & {$e_1$} & {$e_2$} & {$\dot{e}_1$} & {$\dot{e}_2$} \\
            \hline
            Unscheduled        & 0.8328          & 0.6688          & 2.5933          & 1.5587 \\
            Scalar scheduling  & 0.6839          & 0.6464          & 2.1307          & 1.2702 \\
            Matrix scheduling  & \textbf{0.0668} & \textbf{0.4515} & \textbf{0.1480} & \textbf{1.1352} \\
            \hline
            \hline
        \end{tabularx}
    \end{table}
The desired trajectory \cref{eqn:trajectory_interpolation} along with the close tracking performance of the three subcontrollers are shown in \Cref{fig:trajectory comparison}. The joint angle error, \(\mbf{e}(t) = \begin{bmatrix} e_1(t) & e _2(t) \end{bmatrix}^{\trans} = \bm{\theta}_\text{d}(t) - \mbf{q}(t)\), is shown in \Cref{fig:error comparison}, where the matrix GS controller has noticeably less error, magnitude wise, than the scalar GS controller. The joint torques, \(\mbf{u}(t) = \begin{bmatrix} \tau_1(t) & \tau_2(t)\end{bmatrix}^{\trans}\), are shown in \Cref{fig:torques comparison}, where all three approaches have similar profiles. The root-mean-square (RMS) joint angle error, and joint angle error rates are tabulated in \Cref{tab:RMS error}. Again, the matrix GS controller realizes much lower RMS angle error and  RMS angle rate error. The \texttt{python} code used to generate the figures presented in this section can be found in the GitHub repository at \url{https://github.com/decargroup/matrix_scheduling_vsp_controllers}.
    \section{Closing Remarks}\label{sec: closing remarks}
Gain-scheduled control of VSP subcontrollers using scheduling matrices is considered in this paper. The proposed gain-scheduling architecture is shown to preserve the VSP properties of the subcontrollers, provided the scheduling matrices are bounded and strongly active, as defined in \Cref{sub: scheduling matrix properties}. The conditions on the scheduling matrices reduce to the same conditions on the scheduling signals reported in \cite{Damaren_passive_map,Forbes_Damaren} when the scheduling matrices are deliberately chosen to be scalar's times the identity matrix. The proposed gain-scheduling architecture is used to control a rigid two-link robot in simulation subject to model uncertainty. Numerical results highlight the added benefit of using scheduling matrices relative to scheduling signals.
    \printbibliography

@book{Brogliato,
	author = {Brogliato, Bernard and Lozano, Rogelio and Maschke, Bernhard and Egeland, Olav},
	title = {{Dissipative Systems Analysis and Control: Theory and Applications}},
	year = {2013},
	publisher = {Springer London},
	location = {Germany},
}

@book{gang_of_four, 
    author = {Zhou, Kemin and Doyle, John and Glover, Keith},
    title = {Robust and Optimal Control},
    location = {United Kingdom},
    year = {1996},
    publisher = {Prentice-Hall},
}

@article{Damaren_passive_map,
    author = {Damaren, C. J.},
    title = {{Gain-Scheduled SPR Controllers for Nonlinear Flexible Systems}},
    journal = {Journal of Dynamic Systems, Measurement, and Control},
    volume = {118},
    number = {4},
    pages = {698--703},
    year = {1996},
}

@article{Forbes_Damaren,
    author = {Forbes, James Richard and Damaren, Christopher John},
    title = {{Design of Gain-Scheduled Strictly Positive Real Controllers Using Numerical Optimization for Flexible Robotic Systems}},
    journal = {Journal of Dynamic Systems, Measurement, and Control},
    volume = {132},
    number = {3},
    year = {2010},
}

@article{Walsh_saturation,
    author = {Walsh, Alex and Forbes, James Richard},
    journal = {IEEE/ASME Transactions on Mechatronics},
    title = {{A Very Strictly Passive Gain-Scheduled Controller: Theory and Experiments}},
    year = {2016},
    volume = {21},
    number = {6},
    pages = {2817--2826},
}

@article{Alex_LPV,
    author = {Walsh, Alex and Forbes, James Richard},
    journal = {IEEE Transactions on Automatic Control},
    title = {{Very Strictly Passive Controller Synthesis With Affine Parameter Dependence}},
    year = {2018},
    volume = {63},
    number = {5},
    pages = {1531--1537},
}

@article{Ryan_conic,
    author = {Brown, Jacob and Caverly, Ryan},
    year = {2021},
    journal={AIAA Guidance, Navigation, and Control Conference, AIAA SciTech Forum},
    title = {{Conic Gain-Scheduled Control of an Aeroelastic Airfoil}},
}

@article{QSR,
    author = {Anderson, Logan and Caverly, Ryan J. and Lamperski, Andrew},
    journal = {American Control Conference},
    title = {{Gain-Scheduled QSR-Dissipative Systems: An Input-Output Approach}},
    year = {2023},
    pages = {2417--2423},
}

@phdthesis{Forbes_thesis,
    author = {Forbes, James Richard},
    school = {University of Toronto},
    address = {Canada},
    year = {2011},
    title = {{Extensions of Input-Output Stability Theory and the Control of Aerospace Systems}}
}

@article{Benhabib,
    author = {Benhabib, R J. and Iwens, R P. and Jackson, R L.},
    title = {{Stability of Large Space Structure Control Systems Using Positivity Concepts}},
    journal = {Journal of Guidance and Control},
    volume = {4},
    number = {5},
    pages = {487--494},
    year = {1981},
}

@article{KYP,
    author = {Kalman, R. E.},
    journal = {Proceedings of the National Academy of Sciences of the United States of America},
    number = {2},
    pages = {201--205},
    publisher = {National Academy of Sciences},
    title = {{Lyapunov Functions for the Problem of Lur'e in Automatic Control}},
    volume = {49},
    year = {1963},
}

@book{Marquez,
    author = "M{\'a}rquez, Horacio J.",
    title = {{Nonlinear Control Systems: Analysis and Design}},
    year = {2003},
    publisher = {Wiley},
    location = {United Kingdom}
}

@book{Sepulchre,
	author = {Sepulchre, R. and Jankovi\'c, M. and Kokotovi\'c, P. V.},
	year = {2012},
	publisher = {Springer London},
    location={Germany},
	title = {{Constructive Nonlinear Control}},
}

@book{Craig,
    author = {Craig, John J.},
    title = {{Introduction to Robotics: Mechanics and Control}},
    year = {2005},
    edition = {3rd ed.},
    publisher = {Prentice-Hall},
    location = {India},
}

@article{Damaren,
    author = "M{\'a}rquez, H J. and Damaren, C J.",
    journal = {IEEE Transactions on Circuits and Systems I: Fundamental Theory and Applications},
    title = {{On the Design of Strictly Positive Real Transfer Functions}},
    year = {1995},
    volume = {42},
    number = {4},
    pages = {214--218},
}

@article{Gapski,
    author = {Geromel, J. C. and Gapski, P. B.},
    journal = {IEEE Transactions on Automatic Control},
    title = {{Synthesis of Positive Real \(\mathcal{H}_2\) Controllers}},
    year = {1997},
    volume = {42},
    number = {7},
    pages = {988--992},
}

@article{Takashi,
    author = {Shimomura, Takashi and Pullen, Samuel P.},
    title = {{Strictly Positive Real \(\mathcal{H}_2\) Controller Synthesis via Iterative Algorithms for Convex Optimization}},
    journal = {Journal of Guidance, Control, and Dynamics},
    volume = {25},
    number = {6},
    pages = {1003--1011},
    year = {2002},
}

@article{Forbes_Dual,
    author = {Forbes, James Richard},
    title = {{Dual Approaches to Strictly Positive Real Controller Synthesis with Performance Using Linear Matrix Inequalities}},
    journal = {International Journal of Robust and Nonlinear Control},
    volume = {23},
    number = {8},
    pages = {903--918},
    year = {2013}
}

@article{Forbes_dilated,
    author = {James Richard Forbes},
    title = {{Synthesis of Strictly Positive Real \(\mathcal{H}_2\) Controllers Using Dilated {LMIs}}},
    journal = {International Journal of Control},
    volume = {92},
    number = {11},
    pages = {2584--2590},
    year = {2019},
}

@article{walsh,
    title = {{Analysis and Synthesis of Input Strictly Passive Gain-Scheduled Controllers}},
    journal = {Journal of the Franklin Institute},
    volume = {354},
    number = {3},
    pages = {1285--1301},
    year = {2017},
    author = {A. Walsh and James Richard Forbes},
}

@book{brysons,
	author = {Franklin, Gene F. and Powell, J. David and Emami-Naeini, Abbas},
	year = {2011},
	publisher = {Pearson Education},
	title = {{Feedback Control of Dynamic Systems}},
    location = {United States},
}

@article{Damaren_GS_Row_Signals,
    author = {Xiaoyu Lang and Chris J. Damaren},
    title = {{Gain-Scheduled Control for an Antenna with Multiple Collocated Sensors and Actuators}},
	journal = {AIAA Guidance, Navigation, and Control Conference},
    year = {2018},
}

@book{Zhou_Robust_Control,
    title={{Robust and Optimal Control}},
    author={Zhou, K. and Doyle, J. C. and Glover, K.},
    year={1996},
    publisher={Prentice Hall},
    location={United Kingdom},
}
\end{document}